\newcommand{\remove}[1]{}
\newtheorem{theorem}{Theorem}[section]
\newtheorem{thm}{Theorem}[section]
\newtheorem{lemma}[thm]{Lemma}
\newtheorem{definition}[thm]{Definition}
\newtheorem{corollary}[thm]{Corollary}
\def\F{{\mathbb{F}}}
\def\R{{\mathbb{R}}}
\def\poly{{\mathrm{poly}}}
\def\B{{\{0,1\}}}
\def\Im{{\text{Image}}}
\def\wt{{\text{weight}}}
\begin{document}

\title{Capacity achieving multiwrite WOM codes}

\author{Amir Shpilka\thanks{Faculty of Computer Science, Technion --- Israel
Institute of Technology, Haifa, Israel, {\tt
shpilka@cs.technion.ac.il}. This research was partially supported
by the Israel Science Foundation (grant number 339/10).}}


\date{}

\maketitle

\begin{abstract}

In this paper we give an explicit construction of a capacity achieving family of binary $t$-write WOM codes for any number of writes $t$, that have a polynomial time encoding and decoding algorithms. The block length of our construction is $N=(t/\epsilon)^{O(t/(\delta\epsilon))}$ when $\epsilon$ is the gap to capacity and encoding and decoding run in time $N^{1+\delta}$.  This is the first deterministic construction achieving these parameters. 
Our techniques also apply to larger alphabets. 

\end{abstract}

\section{Introduction}

In \cite{RivestShamir82} Rivest and Shamir introduced the notion
of {\em write-once-memory} and showed its relevance to the problem
of saving data on optical disks. A write-once-memory, over the
binary alphabet, allows us to change the value of a memory cell
(say from $0$ to $1$) only once. Thus, if we wish to use the
storage device for storing $t$ messages in $t$ rounds, then we
need to come up with an encoding scheme that allows for $t$-write
such that each memory cell is written at most one time in the binary setting, or it is written monotonically increasing values in the $q$-ary setting. An
encoding scheme satisfying these properties is called a $t$-write
Write-Once-Memory code, or a $t$-write WOM code for short. This model has
recently gained renewed attention due to similar problems that
arise when using flash memory devices. We refer the readers to
\cite{YKSVW10,GYDSVW11} for a more detailed introduction to WOM codes and
their use in encoding schemes for flash memory.

One interesting goal concerning WOM codes is to find codes that
have good rate for $t$-write. Namely, to find encoding schemes
that allow to save the maximal information-theoretic amount of
data possible under the write-once restriction. Following
\cite{RivestShamir82} it was shown that the capacity (i.e. maximal
rate) of $t$-write  binary WOM code is\footnote{All logarithms in
this paper are taken base $2$.} $\log(t+1)$ (see
\cite{RivestShamir82,Heegard84,FuVinck99}). Stated differently, if
we wish to use an $n$-bit memory $t$-times then each time we can
store, on average, $n\cdot \log(t+1)/t$ many bits.

\remove{
Currently, the
best known explicit encoding scheme for two-write  (over the
binary alphabet) has rate roughly $1.49$ (compared to the optimal
$\log 3 \approx 1.585$) \cite{YKSVW10}. We note that these codes,
of rate $1.49$, were found using the help of a computer search. A
more `explicit' construction given in  \cite{YKSVW10} achieves
rate $1.46$.
}

In this work we give, for any number of writes $t$, an explicit construction of an encoding scheme that achieves the theoretical capacity. Before describing our results we give
a formal definition of a $t$-write WOM code.

For two vectors of the same length $y$ and $y'$ we say that
$y'\leq y$ if $y'_i\leq y_i$ for every coordinate $i$.


\begin{definition}
A $t$-write binary WOM of length $n$ over the sets of messages
$\Omega_1, \Omega_2,\ldots,\Omega_t$  consists of $t$ encoding functions $E_1,\ldots,E_t$ such that
$E_1:\Omega_1 \to \B^n$, $E_2: E_1(\Omega_1)\times \Omega_2 \to
\B^n$ and in general, $E_i:\Im(E_{i-1})\times \Omega_i\to \B^n$, and $t$ decoding functions $D_i:\Im(E_i) \to \Omega_i$ that
satisfy the following properties.\footnote{To simplify notation let us denote by $E_0$ a map whose image is $\Im(E_0)=\vec{0}$. We also abuse notation and denote $E_1(\vec{0},x)=E_1(x)$. }
\begin{enumerate}
  \item For all $1\leq i\leq t$,  $c \in \Im(E_{i-1})$ and  $x_i\in \Omega_i$, $D_i(E_i(c,x_i))=x_i$.
  \item For all $1\leq i\leq t$,  $c \in \Im(E_{i-1})$ and  $x_i\in \Omega_i$ we have
  that $c\leq E_i(c,x_i)$.
\end{enumerate}
The rate of such a WOM code is defined to be $(\log|\Omega_1| +
\log|\Omega_2|+\ldots+\log|\Omega_t|)/n$.
\end{definition}

Intuitively, the definition enables the encoder to use $E_1$ as
the encoding function in the first round. If the message $x_1$ was
encoded (as the string $E_1(x_1)$) and then we wished to encode in
the second round the message $x_2$, then we write the string
$E_2(E_1(x_1),x_2)$, etc. Since $E_1(x_1)\leq E_2(E_1(x_1),x_2)$, we
only have to change a few zeros to ones (in the binary setting) in order to move from
$E_1(x_1)$ to $E_2(E_1(x_1),x_2)$. Similarly for any one of the $t$ writes. The requirement on the decoding
functions guarantees that at each round we can
correctly decode the memory.\footnote{We implicitly assume that
the decoder knows, given a codeword, the round in which it was encoded. At worst this can add another $t$-bits to
the encoding which has no real effect (in the asymptotic sense) on the
rate.} Notice, for example, that in the $i$th round we are only required to
decode $x_i$ and not any of the earlier messages.

Similarly, one can also define WOM codes over other
alphabets. 

The capacity region of a binary $t$-write binary WOM code was shown in \cite{RivestShamir82,FuVinck99,Heegard84} to be the set 
\begin{align*}
C_t = \left\{ (R_1,\ldots,R_t)\in\R_{+}^t \mid  R_1 \leq H(p_1),  R_2\leq (1-p_1)H(p_2), R_{3} \leq \prod_{i=1}^{2}(1-p_i) H(p_{3}),\right. \quad\quad \\  \left. \ldots, R_{t-1} \leq \prod_{i=1}^{t-2}(1-p_i) H(p_{t-1}), R_t\leq \prod_{i=1}^{t-1}(1-p_i), \text{ where } 0\leq p_1,\ldots,p_{t-1}\leq 1/2 \right\}.
\end{align*}
From this it is not hard to show that the maximal rate of a $t$-write WOM code is $\log(t+1)$ and that this rate is achieved by the point 
\begin{equation*}
\left(R_1,\ldots,R_t\right)=\left(H\left(\frac{1}{t+1}\right),\frac{t}{t+1}H\left(\frac{1}{t}\right),\frac{t-1}{t+1}H\left(\frac{1}{t-1}\right),\ldots,\frac{3}{t+1}H\left(\frac{1}{3}\right),\frac{2}{t+1}\right)
\end{equation*} 
in the capacity region (corresponding to the choice $(p_1,\ldots,p_{t-1})=(\frac{1}{t+1},\frac{1}{t},\ldots,\frac{1}{3})$).
For a point $(R_1,\ldots,R_t)\in C_t$ we call the vector $(p_1,\ldots,p_{t-1})\in [0,\frac{1}{2}]^{t-1}$ the corresponding weight vector. Namely, $(p_1,\ldots,p_{t-1})$ is a corresponding weight vector if $R_1 \leq H(p_1), R_2\leq (1-p_1)H(p_2),\ldots, R_{t-1} \leq \prod_{i=1}^{t-2}(1-p_i) H(p_{t-1})$ and $R_t\leq \prod_{i=1}^{t-1}(1-p_i)$.

\subsection{Our results}\label{sec: our results}

We provide an encoding scheme that for any $\epsilon>0$ and  $(R_1,\ldots,R_t)$ in the capacity region achieves rate $R_1+\ldots+R_t-\epsilon$ with polynomial time encoding and decoding schemes. 

\begin{theorem}\label{thm:main}
  For any $\epsilon,\delta>0$ and $(R_1,\ldots,R_t)\in C_t$  there exists an integer $N=N(\epsilon,\delta, t)$ and an explicit construction of a $t$-write binary WOM code of length $N$ of rate at least $\sum_{i=1}^{t}R_i -\epsilon$.
  Furthermore, the encoding and decoding functions run in time $ N^{1+\delta}$.
\end{theorem}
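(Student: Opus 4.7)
I would build the code round by round, so that the task reduces, in each round $i$, to producing an encoder $E_i$ that writes an arbitrary message $x_i \in \Omega_i$ of length roughly $R_i N$ by flipping some of the zero-coordinates of the current state $c \in \Im(E_{i-1})$, plus a decoder $D_i$ that recovers $x_i$ from the resulting vector alone. If each round individually achieves rate $R_i - \epsilon/t$, the total rate is $\sum_{i=1}^t R_i - \epsilon$ as required.

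For the one-shot subproblems I would use a concatenated construction. Partition the $N$ bits into $M = N/L$ blocks of length $L = L(t,\epsilon,\delta)$, a large constant depending only on the parameters. For each $L$-block I would precompute, by brute-force search in time $2^{O(L)}$, a small ``inner'' $t$-write WOM code whose round-$i$ encoding maps (old block state, local message) to a dominating new block state, and which achieves rate within $\epsilon/(2t)$ of $R_i$ out of every old block state whose Hamming weight lies in a typical range around $\bigl(1 - \prod_{j<i}(1-p_j)\bigr)L$. Existence of such a local code for $L$ large enough follows from the probabilistic argument showing $C_t$ is the true capacity region, combined with Chernoff concentration of weights inside a single block. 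Since $L$ is constant, exhaustive search finds a good code in constant time.

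To stitch the $M$ inner codes into a global one whose rate matches $R_i$ even though the old state of each block varies with the history of messages, I would use an arithmetic-coding-style allocation in each round: process the blocks in order and allocate to block $j$ a number of message bits equal to $\log_2$ of the local residual capacity of its inner code out of the current block state. So long as the total local capacity across blocks is at least $R_i N$ in each round, this produces a valid encoding of any $R_i N$-bit message; the decoder reverses the allocation block by block. Both encoder and decoder then run in time $M \cdot 2^{O(L)} + \tilde{O}(N) \leq N^{1+\delta}$ for a suitably chosen $L$.

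The hard part will be controlling the joint behavior of the $M$ inner blocks across all $t$ rounds simultaneously: the inner code must tolerate every sequence of old-state weights the encoder might see, not just the expected one, and the encoder may need to spend a small ``balancing'' overhead to keep per-block weight profiles inside the tolerated range at each round. Quantifying this robustness---showing the simultaneous loss summed over the $t$ rounds is bounded by $\epsilon$---is the crux of the argument, and it is what drives the quantitative bound $N = (t/\epsilon)^{O(t/(\delta\epsilon))}$ on the block length.
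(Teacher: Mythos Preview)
Your high-level plan---find a good short inner code by exhaustive search, then concatenate---is exactly the ``brute force scheme'' the paper discusses (and dismisses) in Section~1.2. The gap is your claim that the inner code can be found in time $2^{O(L)}$. A $t$-write WOM code of length $L$ is specified by its encoding/decoding tables; even a single round-$i$ decoder $D_i:\{0,1\}^L\to\Omega_i$ ranges over $|\Omega_i|^{2^L}$ possibilities, so exhaustive search is $2^{2^{\Theta(L)}}$, not $2^{O(L)}$. Since one needs $L=\Omega(1/\epsilon)$ for the inner code to be $\epsilon$-close to capacity, your approach yields $N\approx 2^{2^{1/\epsilon}}$, which proves the theorem as literally stated but not the bound $N=(t/\epsilon)^{O(t/(\delta\epsilon))}$ you assert at the end. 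The probabilistic existence argument you invoke does not help here: it shows a \emph{random} table works, but derandomizing it by enumeration is doubly exponential.

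The paper's actual contribution is precisely to collapse this doubly-exponential search to a singly-exponential one. It does so by restricting the round-$i$ decoder to lie in the pairwise-independent hash family $\mathcal H^{n,k,\ell}=\{x\mapsto (ax+b)|_{[k-\ell]}\}$, which has only $2^{2n}$ members. The leftover hash lemma (with exponentially small error) guarantees that some $H\in\mathcal H^{n,k,\ell}$ simultaneously hits all targets $x_1,\ldots,x_m$ from the ``sparse-above-$w_i$'' sources, so a brute-force search over $(a,b)$ and over candidate $y_i$'s succeeds in time $2^{O(n)}$. The hash index is written on $2n$ fresh cells so the decoder knows which $H$ to apply. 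This is the idea your proposal is missing.

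A secondary issue: your arithmetic-coding allocation depends on each block's \emph{old} state, but the round-$i$ decoder only sees the \emph{new} state and has no way to recover how many bits were allocated to each block. The paper sidesteps this entirely by enforcing a hard weight cap $(1-\prod_{j\le t_0}(1-p_j))n$ after every round (the $y_i$ are chosen from a set with this cap built in), so every block is always ``typical'' and receives an identical, fixed-length message---no variable allocation is needed. Your ``balancing overhead'' remark gestures at this but does not resolve the decodability problem.
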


In particular, for $(R_1,\ldots,R_t)=(H(\frac{1}{t+1}),\frac{t}{t+1}H(\frac{1}{t}),\frac{t-1}{t+1}H(\frac{1}{t-1}),\ldots,\frac{3}{t+1}H(\frac{1}{3}),\frac{2}{t+1})$ we give a construction of a binary WOM code of rate $\log(t+1)-\epsilon$. If we wish to
achieve a polynomial time encoding and decoding then our proof
gives the bound $N(\epsilon,\delta,t)=
(\frac{t}{\epsilon})^{O(\frac{t}{\delta\epsilon})}$. If we wish to have a short block
length, e.g. $N=\poly(1/\epsilon)$, then our running time
deteriorates and becomes $N^{O(t/\epsilon)}$.

A completely analogous construction also works for arbitrary alphabets. However, here it is more difficult to explicitly characterize the capacity region. Hence, our theorem for this case basically says that given realizable parameters we can construct a scheme achieving them (up to $\epsilon$ loss in rate). We shall use the following notation. Let $S_1,\ldots,S_t$ be lower-triangular column-stochastic matrices of dimension $q\times q$. We shall think of the $(j_2,j_1)$ entry of $S_i$ (when $j_1,j_2\in\{0,1,2,\ldots,q-1\}$) as describing the fraction of memory cells that had value $j_1$ before the $i$th write and then were written the value $j_2$ in the $i$th round, where the fraction is computed relative to the number of memory cells holding the value $j_1$. Thus, these matrices describe the distribution of values that we expect to see after each write and also how values change in the process. For example, in $S_1$ only the first column is meaningful as the memory initially holds the value $0$. We say that $S_1,\ldots,S_t$ is in the capacity region of $t$-write WOM codes over $\{0,\ldots,q-1\}$ if there exists an encoding scheme realizing $S_1,\ldots,S_t$ (i.e. a scheme that abides the constraints described by the $S_j$s). 

\begin{theorem}\label{thm:main-large}
  For any $\epsilon,\delta>0$ and $(S_1,\ldots,S_t)$ in the capacity region of $t$-write WOM over the alphabet $\{0,\ldots,q-1\}$ there exists an integer $N_q=N_q(\epsilon,\delta, t)$ and an explicit construction of a $t$-write $q$-ary WOM code of length $N_q$ that 
  has rate which is $\epsilon$-close to the rate of the scheme realizing $S_1,\ldots,S_t$.
  Furthermore, the encoding and decoding functions run in time $ N_q^{1+\delta}$.
\end{theorem}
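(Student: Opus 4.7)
The plan is to imitate the binary construction of Theorem~\ref{thm:main} essentially line by line, with the binary weight vector $(p_1,\ldots,p_{t-1})$ replaced throughout by the column-stochastic matrices $(S_1,\ldots,S_t)$. The construction has two ingredients: a short inner $q$-ary WOM scheme realizing the target matrices on a block of length $n_0=n_0(\epsilon,\delta,t,q)$, and an outer concatenation producing a code of length $N_q$ in which the per-block distribution of cell states is controlled at every write.

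For the inner code, the hypothesis that $(S_1,\ldots,S_t)$ lies in the $q$-ary capacity region means that for large enough $n_0$ a random coding / probabilistic argument yields a scheme realizing $(S_1,\ldots,S_t)$ with rate within $\epsilon/2$ of the target. I would find such a scheme by the same construction used in the binary proof, with $q$ entering only as a constant in $n_0$ and in the one-time search cost. For the outer layer, I would partition the $N_q$ memory cells into $N_q/n_0$ blocks of size $n_0$, apply the inner code within each block, and use an efficient outer code (for example a Reed--Solomon or list-recoverable code) to route messages to blocks and to store any auxiliary information the decoder needs. Because all of the $q$-ary structure is confined to the inner blocks of constant size, the encoding and decoding times remain $N_q^{1+\delta}$ just as in the binary theorem.

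The main obstacle is ensuring that after each of the $t$ writes, the empirical distribution of cell states inside a typical block is close to the predicted distribution $(S_i S_{i-1} \cdots S_1)_{\cdot,0}$, so that the inner code can realize the transitions prescribed by column $j_1$ of $S_i$ on exactly those cells of the block currently in state $j_1$. A single-write Chernoff argument gives concentration across blocks, but the deviations may compound over the $t$ writes and the $q$ possible states. I would handle this by inducting on the write index, losing $\epsilon/(2t)$ rate per write, and discarding at each stage a vanishingly small set of ``bad'' blocks whose state counts deviate too far from their expectations; those bad blocks can be absorbed into the auxiliary information carried by the outer code, exactly as in the binary argument. Everything else is a direct translation of the binary proof, with the binary entropies $H(p_i)$ replaced by the Shannon entropies of the columns of the $S_i$ and the role of the marginal bit-weight taken by the iterated matrix products $S_i\cdots S_1$.
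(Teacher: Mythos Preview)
Your proposal diverges from the paper's argument in its main technical content, and the divergence is not cosmetic: the specific mechanisms you introduce (an outer Reed--Solomon/list-recoverable code, Chernoff-type concentration across blocks, discarding ``bad'' blocks whose state counts drift) play no role in the paper's proof, and the mechanism that \emph{does} carry the proof---the pairwise-independent hash family together with the leftover hash lemma---is absent from your sketch.

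The paper's proof is a direct $q$-ary transcription of the binary construction. One works over $\F_{q^n}$, defines $H_{a,b}^{n,k,\ell}(x)=(ax+b)_{[k-\ell]}$ as an $\F_q$-affine map, and invokes the $q$-ary leftover hash lemma to get the analogue of Corollary~\ref{cor:common solution}: for any $m<q^{\ell/4}$ sources $Y_1,\ldots,Y_m\subseteq\F_q^n$ of size $\ge q^k$ and any targets $x_1,\ldots,x_m\in\F_q^{k-\ell}$, some $H\in\mathcal H_q^{n,k,\ell}$ hits all targets simultaneously from the prescribed sources. The matrices $S_i$ enter only to define the sources: given the current memory word $w$, one sets
\[
Y_i(w)=\bigl\{\,w'\ge w:\ \#\{b:w'_b=j_2,\ w_b=j_1\}\approx (S_i)_{j_2,j_1}\cdot\#\{b:w_b=j_1\}\ \text{for all }j_1,j_2\,\bigr\},
\]
whose size is at least $q^k$ for the appropriate $k$ by a multinomial count. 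Brute-force over $\mathcal H_q^{n,k,\ell}\times\prod_i Y_i$ finds $H$ and the $y_i$'s in time $q^{O(n)}$. The ``outer'' step is trivial string concatenation of $q^{O(n/\delta)}$ copies of this basic block, together with $2n$ fresh cells per write to record the index $(a,b)$ of the chosen $H$. There is no routing code and no probabilistic concentration step: the state distribution after each write is controlled \emph{deterministically}, because each $y_i$ is chosen from $Y_i(w_i)$, and that set encodes the transition constraints of $S_i$ exactly.

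Your inner-code description (``random coding / probabilistic argument yields a scheme'') is essentially the brute-force baseline the paper discusses in Section~\ref{sec:brute}: it works, but it forces $n_0$ to absorb a search over all $q^{n_0}$-ary encoding tables, giving $N_q$ doubly exponential in $1/\epsilon$ rather than the paper's $(t/\epsilon)^{O(t/(\delta\epsilon))}$. The hash-function machinery is precisely what replaces that doubly-exponential search with a $q^{O(n)}$ one. Likewise, your Chernoff/bad-block induction addresses a non-problem: because the per-block state profile is fixed by the hard constraint $y_i\in Y_i(w_i)$, no block ever deviates and nothing needs to be discarded.
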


As before we have the bound $N_q(\epsilon,\delta,t)=
(\frac{t}{\epsilon})^{O(\frac{t}{\delta\epsilon})}$, where now the big Oh may depend on $q$ as well. Since our proof for the binary setting contains all the relevant ideas, and the extension to other alphabets is (more or less) obvious, we shall omit most of the details of the non-binary case.

\subsection{Comparison to the brute force scheme}\label{sec:brute}

The first observation that one makes is that the problem of
approaching capacity is, in some sense, trivial. This basically
follows from the fact that concatenating WOM codes (in the sense
of string concatenation) does not hurt any of their properties.
Thus, if we can find, even in a brute force manner, a code of
length $m$ that is $\epsilon$-close to capacity, in time $T(m)$,
then concatenating $N=T(m)$ copies of this code, gives a code of
length $Nm$ whose encoding algorithm takes $NT(m)=N^2$ time.
Notice however, that for the brute force algorithm, $T(m) \approx
2^{2^m}$ and so, to get $\epsilon$-close to capacity we need
$m\approx 1/\epsilon$ and thus $N \approx 2^{2^{1/\epsilon}}$.

\remove{
The same argument also shows that finding capacity approaching WOM
codes for $t$-write, for any constant $t$, is ``easy'' to achieve
in the asymptotic sense, with a polynomial time encoding/decoding
functions, given that one is willing to let the encoding length
$n$ be obscenely huge.
}

In fact, following Rivest and Shamir,
Heegard actually showed that a randomized encoding scheme can
achieve capacity for all $t$ \cite{Heegard84}, but, naturally, such encoding schemes do not have efficient decoding algorithms.

In view of that, our construction can be seen as giving a big
improvement over the brute force construction. Indeed, we only
require $N \approx {(t/\epsilon)^{t/\epsilon}}$ and we give encoding and
decoding schemes that can be implemented efficiently (as functions of the block length).

We later discuss a connection with {\em invertible extractors
for hamming balls} and show that an efficient construction of such objects could lead to capacity-achieving WOM codes of
reasonable block length.

\subsection{Comparison to earlier works}\label{sec:other works}

As we explain in Section~\ref{sec:method} our work is close in spirit to \cite{Shpilka-WOM12}, but whereas the latter dealt mainly with the case of $2$-write WOM codes, here we develop a scheme for any number of writes. 

\medskip

Prior to our work the best {\em deterministic} encoding schemes for several writes where obtained in the works \cite{KYSVW10,YaakobiShpilka12}. As we achieve capacity here and previous constructions where at some distance from capacity, which grew with the number of writes, we do not give a detailed comparison of the different results and just mentioned that except the work \cite{Shpilka-WOM12} no other work was guaranteed to achieve capacity. (Although we soon discuss a different work that does achieve capacity, but has some failure probability)
On the other hand, while \cite{KYSVW10,YaakobiShpilka12} did not achieve capacity they did give some multiple write codes of a significantly shorter block length than what we achieve here. So, in that respect, they give codes that may be more practical than our new construction. 

\medskip 

In \cite{BurshteinS12} Burshtein and Strugatski gave a construction of capacity achieving binary WOM codes based on polar codes. Their construction is based on the work of \cite{KoradaU10} that uses polar codes for source coding. A small downside of their construction is that it is guaranteed to work with high probability, but there is always some small chance of failure. In particular, in contrast to our construction, it is not always guaranteed to succeed. On the other hand, since \cite{BurshteinS12} rely on polar codes that have very efficient encoding and decoding schemes, it may be the case that they can make their construction work with a block length that is polynomial in $1/\epsilon$, which is much better than our block length which is exponentially large in $1/\epsilon$. We note however, that it is not clear that their block length is indeed only polynomial in $1/\epsilon$, as it is a bit difficult to analyze the dependence between distortion-rate (which affects the gap to capacity in the ensuing construction of \cite{BurshteinS12}) and block-length in the construction of \cite{KoradaU10}. It may very well be the case that although polar codes are so efficient, we still encounter a significant blow up in block-length when pushing both distortion-rate and failure probability parameters down. Nevertheless, as we said, it may still be the case that they can get codes of $\poly(1/\epsilon,\log(1/\delta))$ block-length which are $\epsilon$-close to capacity and have failure probability at most $\delta$, in which case these codes will have more reasonable parameters than what we achieve here. 


\subsection{Notation}
For a $k\times m$ matrix $A$ and a subset $S\subset [m]$ we let
$A|_S$ be the $k\times |S|$ sub-matrix of $A$ that contains only
the columns that appear in $S$. For a length $m$ vector $y$ and a
subset $S \subset [m]$ we denote with $y|_S$ the vector that has exactly $|S|$ coordinates, which is obtained by
``throwing away'' all coordinates of $y$
outside $S$.

We denote $[k]=\{1,\ldots,k\}$. for a binary vector $w\in B^n$ we denote by $\wt(w)$, the weight of $w$, the number of nonzero entries of $w$.

\section{Capacity achieving $t$-write  WOM codes}\label{sec:construction}

\subsection{Our approach}\label{sec:method}

We describe our technique for proving Theorem~\ref{thm:main}. We start by explaining the connection to the $2$-write WOM codes of \cite{Shpilka-WOM12}, which themselves follow ideas from \cite{Wu10,YKSVW10}.

Similarly to \cite{RivestShamir82}, in the first round
\cite{Wu10,YKSVW10,Shpilka-WOM12} think of a message as a subset $S\subset [n]$ of
size $pn$ and encode it by its characteristic vector. This is also what we do in this paper.
Clearly in this step we can transmit $H(p)n$ bits of information.
(I.e. $\log|\Omega_1| \approx H(p)n$.)

For the second round assume that we already sent a message
$S\subset [n]$. I.e. we have already wrote  $pn$ locations. Note
that in order to match the capacity we should find a way to
optimally use the remaining $(1-p)n$ locations in order to
transmit $(1-p-o(1))n$ many bits. \cite{Shpilka-WOM12} handled this by giving a collection of binary codes that
are, in some sense, {\em MDS codes on average}. Namely, a
collection of (less than) $2^{n}$ matrices $\{A_i\}$ of size
$(1-p-\epsilon)n\times n$ such that for any subset $S\subset[n]$,
of size $pn$, all but a fraction $2^{-\epsilon n}$ of the matrices
$A_i$, satisfy that $A_i|_{[n]\setminus S}$ has full row rank
(i.e. rank $(1-p-\epsilon)n$). Now, if in the first round
we transmitted a word $w$ corresponding to a subset $S\subset[n]$
of size $pn$ then in the second round we find a matrix $A_i$ such that
$A_i|_{[n]\setminus S}$ has full row rank. Now, given a word $x$ of length $(1-p)n$ to write in the second round, we find (by solving a system of linear equations) a vector $y\geq w$ such that $A_i y=x$. We then write $y$ to memory. Note, however, that we also have to write the index $i$ to memory so the decoder knows how to decode $y$ (the decoding function simply multiplies $y$ by $A_i$). However, since $i$ ranges over a set of size $2^n$, its index has length $n$ so writing it to memory reduces the rate significantly. Furthermore, to find the good $A_i$ we need to range over an exponentially large space, namely, a space of size $2^n$. Nevertheless, \cite{Shpilka-WOM12} showed that we can use the same $A_i$ for encoding many messages $x_i$ in parallel, so by concatenating $N=\exp(n)$ many copies of this basic construction, we obtain a code with the required properties. Indeed, now, a running time of $\exp(n)$ is merely polynomial in $N$.

The approach that we apply here is similar but with a few notable changes. Since we wish to have more than two write rounds we cannot use all the $(1-p)n$ bits left after writing the subset $S$ to memory. Instead we have to leave many unwritten bits for the third round and then to the fourth round etc. To do so, and still follow the previous approach, we would like to find a {\em sparse} vector $y\geq w$ that still satisfies $A_i y=x$, for some $A_i$ in the collection. For certain, to ensure that we can find such a $y$ we must know that such $y$ indeed exists. Namely, that there is some $A_i$ that has such a sparse solution above $w$ to the equality $A_i y=x$. Moreover, if we wish to continue with the analogy then we need to find such $A_i$ that will be good simultaneously to many equations of the form $A_i y_j=x_j$ where $y_j\geq w_j$. Indeed this is what we do. While the work \cite{Shpilka-WOM12} used the Wozencraft ensemble of matrices, in this work we use a universal family of pairwise independent hash-functions. (We describe the family and its properties in Section~\ref{sec: hash}.) We show that in this family we are always guaranteed to find a matrix for which there exist sparse solutions (that sit above the relevant $w_i$'s) to several such equations. This basically follows from the fact that this family is an extractor with an exponentially small error. (We state the required properties, which are weaker than what can actually be proved, in Lemma~\ref{lemma: hash is good}.) However, it is still not clear how to find the solutions $y_j$ even if they exist. The point is that by solving a system of linear equations we are not guaranteed to find a sparse solution. Here we again use the fact that we are allowed to concatenate many copies of our construction in parallel. Indeed, as finding a good $A_i$ already requires exponential time, we can let ourselves look for a solution $y_j$ in exponential time without hurting the already exponential running time. At the end, when we concatenate exponentially many copies of this basic construction, it will look as if we are solving in a brute force manner a problem on a logarithmically small space, which at the end is not too terrible. The effect of concatenating exponentially many copies of the basic construction, besides making the running time a small polynomial in the block length, is that the output block length is exponential in $1/\epsilon$, where $\epsilon$ is the gap to capacity that we achieve. 

We note that the construction of \cite{Shpilka-WOM12} suffered a similar blow-up in block length, from essentially the same reasons. The same is also true for the best codes of  \cite{YaakobiShpilka12} as their construction relies on  some of the tools developed in \cite{Shpilka-WOM12}.

\subsection{Hash functions}\label{sec: hash}

Our construction will make use of the following family of pairwise independent hash functions.
It will be instructive to think of the field $\F_{2^n}$ both as a field and as a vector space over $\F_2$. Thus, each vector $x\in\F_{2^n}$ can be naturally described as a length $n$ vector, after we fix some basis to $\F_{2^n}/\F_2$. For $a,b\in \F_{2^n}$ we define the map $H^{n,k,\ell}_{a,b}:\B^n\to\B^{k-\ell}$ as $H^{n,k,\ell}_{a,b}(x) = (ax+b)_{[k-\ell]}$. In words, we compute the affine transformation $ax+b$ in $\F_{2^n}$, represent it as an $n$-bit vector using the natural map and then keep the first $k-\ell$ bits of this vector. Note that we think of $x$ both as an $n$-bit vector and as an element of $\F_{2^n}$. We represent this family by ${\cal H}^{n,k,\ell}$. Namely
\begin{equation*}
{\cal H}^{n,k,\ell} = \left\{ H_{a,b}^{n,k,\ell} \mid a,b \in \F_{2^n}\right\}.
\end{equation*}

We will use the following special case of the leftover hash lemma of \cite{ImpagliazzoLL89}.

\begin{lemma}[Leftover hash lemma \cite{ImpagliazzoLL89}]\label{lemma: hash is good}
Let $k,\ell,n$ be integers. Let $Y\subseteq\B^{n}$ be a set of size $2^{k}$. Then, the distribution $(a,b,H^{n,k,\ell}_{a,b}(y))$, obtained by picking $a,b\in\F_{2^n}$ and $y\in Y$ independently and uniformly at random, is $2^{-\ell/2}$ close, in statistical distance,\footnote{The statistical distance between two distribution $D_1$ and $D_2$ defined over a set $X$ is $\max_{T\subseteq X} |\Pr_{D_1}[T]-\Pr_{D_2}[T]|$. Note that this is half the $\ell_1$ distance between $D_1$ and $D_2$.} to the uniform distribution on $\B^{2n+k-\ell}$.
\end{lemma}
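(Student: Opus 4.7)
The plan is to prove this via the standard ``collision probability implies closeness to uniform'' approach, which is the classical route for the leftover hash lemma. The key inputs are (i) pairwise independence of the affine family $\{(a,b)\mapsto ax+b\}$ over $\F_{2^n}$, and (ii) the Cauchy--Schwarz-based fact that if a distribution $D$ on a set of size $M$ has collision probability at most $(1+\eta)/M$, then the statistical distance between $D$ and the uniform distribution on that set is at most $\tfrac12\sqrt{\eta}$.

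First I would verify that $\mathcal{H}^{n,k,\ell}$ is pairwise independent (or at least 2-universal) on the output space $\B^{k-\ell}$. For any $x_1\neq x_2\in\F_{2^n}$, the map $(a,b)\mapsto(ax_1+b,\,ax_2+b)$ is an $\F_2$-linear bijection from $\F_{2^n}^2$ to itself, so $(ax_1+b,\,ax_2+b)$ is uniform on $\F_{2^n}\times\F_{2^n}$ when $(a,b)$ is uniform. Projecting onto the first $k-\ell$ bits, $(H_{a,b}(x_1),H_{a,b}(x_2))$ is uniform on $\B^{k-\ell}\times\B^{k-\ell}$; in particular $\Pr_{a,b}[H_{a,b}(x_1)=H_{a,b}(x_2)] = 2^{-(k-\ell)}$.

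Next I would compute the collision probability of $D := (a,b,H^{n,k,\ell}_{a,b}(y))$, i.e.\ the probability that two independent samples $(a,b,y)$ and $(a',b',y')$ of $D$ coincide. Since $a,b,a',b'$ are independent uniform, agreement of the first two coordinates contributes a factor $2^{-2n}$, leaving
\begin{equation*}
\mathrm{cp}(D) \;=\; 2^{-2n}\cdot \Pr_{a,b,y,y'}\bigl[H_{a,b}(y)=H_{a,b}(y')\bigr].
\end{equation*}
Splitting on whether $y=y'$ (probability $2^{-k}$, contributing $1$) or $y\neq y'$ (using the pairwise-independence calculation above, contributing $2^{-(k-\ell)}$), I get
\begin{equation*}
\mathrm{cp}(D) \;\le\; 2^{-2n}\Bigl(2^{-k}+2^{-(k-\ell)}\Bigr) \;=\; \frac{1+2^{-\ell}}{2^{2n+k-\ell}}.
\end{equation*}
Since the uniform distribution on $\B^{2n+k-\ell}$ has collision probability $2^{-(2n+k-\ell)}$, this gives $\eta\le 2^{-\ell}$.

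Finally I would invoke the collision-probability-to-distance inequality: $\|D-U\|_1^2 \le M\cdot\mathrm{cp}(D) - 1$, which follows from Cauchy--Schwarz applied to the vector $D(x)-1/M$. With $M=2^{2n+k-\ell}$ and the bound above, $\|D-U\|_1^2 \le 2^{-\ell}$, and since statistical distance is half the $\ell_1$ distance, $\Delta(D,U)\le \tfrac12\cdot 2^{-\ell/2} \le 2^{-\ell/2}$. There is no real obstacle here; the only care needed is making sure ``pairwise independence after truncation'' is handled correctly (the truncation does preserve joint uniformity because $(a,b)\mapsto(ax_1+b,ax_2+b)$ is a bijection on $\F_{2^n}^2$, and the truncation is applied after this bijection). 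All the remaining steps are routine and need no further calculation.
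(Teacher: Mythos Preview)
Your proof is correct and is precisely the classical collision-probability argument for the leftover hash lemma. The paper does not actually prove this lemma; it states it as a citation to \cite{ImpagliazzoLL89} and uses it as a black box, so there is nothing to compare against beyond noting that your argument is the standard one underlying that reference.
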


\begin{corollary}\label{cor: has support}
Let $Y\subseteq\B^{n}$ be a set of size $2^{k}$. The fraction of $H\in {\cal H}^{n,k,\ell}$ such that $|H(Y)|\leq 2^{k-\ell}(1 - 2^{-\ell/4})$ is at most $2^{-\ell/4}$.
\end{corollary}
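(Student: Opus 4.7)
The plan is to reduce the statement about the size of the image $H(Y)$ to a statement about statistical distance, so that the Leftover Hash Lemma and Markov's inequality do the work.

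First I would unpack Lemma~\ref{lemma: hash is good}. Since $a,b$ are chosen uniformly in $\F_{2^n}$, the statistical distance of the joint distribution $(a,b,H^{n,k,\ell}_{a,b}(y))$ from the uniform distribution on $\B^{2n+k-\ell}$ equals the expectation over $H\in {\cal H}^{n,k,\ell}$ of the statistical distance between the distribution $H(y)$ (for $y$ uniform in $Y$) and the uniform distribution $U_{k-\ell}$ on $\B^{k-\ell}$. Hence Lemma~\ref{lemma: hash is good} yields
\begin{equation*}
\mathbb{E}_{H\in {\cal H}^{n,k,\ell}}\bigl[\text{SD}(H(y),U_{k-\ell})\bigr]\;\leq\;2^{-\ell/2}.
\end{equation*}

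Next I would apply Markov's inequality: the fraction of $H$ for which $\text{SD}(H(y),U_{k-\ell})\geq 2^{-\ell/4}$ is at most $2^{-\ell/2}/2^{-\ell/4}=2^{-\ell/4}$.

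Finally I would relate the image size $|H(Y)|$ to this statistical distance. Fix $H$ and let $T=H(Y)\subseteq\B^{k-\ell}$. The distribution of $H(y)$ is supported on $T$, so testing against the set $\B^{k-\ell}\setminus T$ gives
\begin{equation*}
\text{SD}(H(y),U_{k-\ell})\;\geq\;\Bigl|0-\frac{2^{k-\ell}-|T|}{2^{k-\ell}}\Bigr|\;=\;1-\frac{|H(Y)|}{2^{k-\ell}}.
\end{equation*}
Consequently, if $|H(Y)|\leq 2^{k-\ell}(1-2^{-\ell/4})$, then $\text{SD}(H(y),U_{k-\ell})\geq 2^{-\ell/4}$. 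Combining with the Markov bound, the fraction of such $H$ is at most $2^{-\ell/4}$, as claimed.

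There is really no obstacle here: the only mildly delicate point is being careful that the joint statistical distance in Lemma~\ref{lemma: hash is good} equals the expectation over $H$ of the statistical distance of $H(y)$ from uniform (this uses that $(a,b)$ is uniform on $\F_{2^n}^2$), after which Markov plus the trivial lower bound on SD in terms of a missed set finish the argument.
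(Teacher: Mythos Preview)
Your argument is correct and is essentially the paper's own proof: the paper phrases the Markov step as a contradiction (``if more than a $2^{-\ell/4}$ fraction were bad, the joint statistical distance would exceed $2^{-\ell/4}\cdot 2^{-\ell/4}=2^{-\ell/2}$''), whereas you run Markov directly after making explicit that the joint distance equals $\mathbb{E}_H[\mathrm{SD}(H(y),U_{k-\ell})]$; the image-size-to-distance step is identical.
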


\begin{proof}
Observe that if $|H(Y)|\leq 2^{k-\ell}(1 - 2^{-\ell/4})$ then the statistical distance between $H(Y)$ and the uniform distribution on $\B^{k-\ell}$ is at least $2^{-\ell/4}$. If there is a fraction $>2^{-\ell/4}$ of such ``bad'' $H$ then the statistical distance between the uniform distributions on $(H,H(Y))$ and ${\cal H}^{n,k,\ell}\times \B^{k-\ell}$ is larger than $2^{-\ell/4}\cdot 2^{-\ell/4}=2^{-\ell/2}$, in contradiction to Lemma~\ref{lemma: hash is good}.
\end{proof}

The following corollary will be very useful for us. 

\begin{corollary}\label{cor:common solution}
Let $\ell,k,n,m$ be integers such that $\ell\leq k\leq n$ and $m< 2^{\ell/4}$. Let $Y_1,\ldots,Y_m \subseteq\B^n$ be sets of size $|Y_1|,\ldots,|Y_m|\geq 2^k$. Then, for any $x_1,\ldots,x_m\in\B^{k-\ell}$ there exists $H\in {\cal H}^{n,k,\ell}$ and $\{y_i\in Y_i\}$ such that for all $i$, $H(y_i)=x_i$. 
\end{corollary}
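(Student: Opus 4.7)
The plan is to use the probabilistic method on a uniformly random $H\in\mathcal{H}^{n,k,\ell}$: I want to show that with positive probability, $x_i\in H(Y_i)$ for every $i$ simultaneously, after which any preimage $y_i\in Y_i\cap H^{-1}(x_i)$ gives what we need. Without loss of generality I may assume $|Y_i|=2^k$ for every $i$ (replace $Y_i$ by any subset of that size if it is larger; shrinking $Y_i$ only strengthens the claim).

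The key technical input is pairwise independence of the family $\mathcal{H}^{n,k,\ell}$. For distinct $y,y'\in\B^n\equiv\F_{2^n}$, the $\F_{2^n}$-linear map $(a,b)\mapsto(ay+b,\,ay'+b)$ has determinant $y-y'\neq 0$, so $(ay+b,ay'+b)$ is uniformly distributed on $\F_{2^n}^{2}$ as $(a,b)$ is uniform, and truncating each coordinate to its first $k-\ell$ bits preserves that uniformity. Hence $(H_{a,b}(y),H_{a,b}(y'))$ is uniform on $\B^{2(k-\ell)}$; in particular each $H(y)$ is marginally uniform on $\B^{k-\ell}$, and distinct inputs produce independent outputs.

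With pairwise independence established I run a second-moment argument for each $i$ separately. Fix $i$ and let $Z_i=|\{y\in Y_i:H(y)=x_i\}|=\sum_{y\in Y_i}\mathbf{1}[H(y)=x_i]$. Linearity gives $\mathbb{E}[Z_i]=|Y_i|\cdot 2^{-(k-\ell)}=2^\ell$, and pairwise independence makes the indicators pairwise uncorrelated, yielding $\mathrm{Var}(Z_i)\leq|Y_i|\cdot 2^{-(k-\ell)}=2^\ell$. Chebyshev then gives $\Pr_H[Z_i=0]\leq\mathrm{Var}(Z_i)/\mathbb{E}[Z_i]^{2}\leq 2^{-\ell}$. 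A union bound over the $m$ indices yields
\begin{equation*}
\Pr_H\!\left[\exists\, i\text{ with }x_i\notin H(Y_i)\right]\;\leq\;m\cdot 2^{-\ell}\;<\;2^{\ell/4}\cdot 2^{-\ell}\;=\;2^{-3\ell/4}\;<\;1,
\end{equation*}
so some $H\in\mathcal{H}^{n,k,\ell}$ satisfies $x_i\in H(Y_i)$ for every $i$, and picking any $y_i\in Y_i\cap H^{-1}(x_i)$ finishes the proof.

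No step is a real obstacle: the affine structure of $\mathcal{H}^{n,k,\ell}$ is tailor-made for the second-moment argument, and the hypothesis $m<2^{\ell/4}$ is comfortably stronger than what Chebyshev requires. If one preferred to stay entirely within the tools already set up in the excerpt, one could instead bound $\Pr_H[x_i\notin H(Y_i)]\leq 2^{-\ell/2}$ by applying Lemma~\ref{lemma: hash is good} together with the output-translation symmetry of the affine family (which makes $\Pr_H[x\notin H(Y_i)]$ constant in $x$ and hence equal to $\mathbb{E}_H[|\B^{k-\ell}\setminus H(Y_i)|]/2^{k-\ell}$, at most $2^{-\ell/2}$ by Markov on the statistical distance bound). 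That weaker per-$i$ estimate still suffices since $m\cdot 2^{-\ell/2}<2^{-\ell/4}<1$.
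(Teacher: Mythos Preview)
Your proof is correct, but it takes a different route than the paper's. You apply a direct second-moment (Chebyshev) argument to the fixed targets $x_i$, obtaining $\Pr_H[x_i\notin H(Y_i)]\le 2^{-\ell}$ and then a union bound. The paper instead proceeds in two stages: first it uses Corollary~\ref{cor: has support} (which rests on the leftover hash lemma) together with a union bound over~$i$ to find a single $H_{a,b}$ for which every image $X_i=H_{a,b}(Y_i)$ is large, namely $|X_i|>2^{k-\ell}(1-2^{-\ell/4})$; then, by a second union bound over~$i$, it finds a common shift $v\in\B^{k-\ell}$ with $x_i\oplus v\in X_i$ for all~$i$, and replaces $b$ by $b\oplus(v\circ\vec{0})$ so that $H_{a,b\oplus v\circ\vec{0}}(y_i)=x_i$. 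Your argument is more elementary---it never invokes the leftover hash lemma, only pairwise independence---and yields a sharper per-$i$ failure bound. The paper's two-step argument, on the other hand, has the pleasant feature that the choice of $(a,b)$ depends only on the sets $Y_i$ and not on the targets $x_i$; the $x_i$'s enter only through the final translation of~$b$. Your alternative via output-translation symmetry and Markov on the expected missing mass is also correct and sits somewhere between the two, using Lemma~\ref{lemma: hash is good} but bypassing Corollary~\ref{cor: has support} and the shift trick.
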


In words, the corollary says that if $m< 2^{\ell/4}$ then we can find an $H\in {\cal H}^{n,k,\ell}$ such that for all $i$, $x_i$ is in the image of $H$ when applied to $Y_i$ (i.e. $x_i\in H(Y_i)$). 

\begin{proof}
By Corollary~\ref{cor: has support} and the union bound we get that there is some $H^{n,k,\ell}_{a,b}\in{\cal H}^{n,k,\ell}$ such that for all $i$, $|H^{n,k,\ell}_{a,b}(Y_i)|> 2^{k-\ell}(1 - 2^{-\ell/4})$. Denote $X_i=H^{n,k,\ell}_{a,b}(Y_i)$. Observe, that by the union bound, there exists $v\in \B^{k-\ell}$ such that for all $i$, $x_i\oplus v\in X_i$. Thus, there exist $\{y_i\in Y_i\}$ such that for all $i$, $H^{n,k,\ell}_{a,b}(y_i)=x_i\oplus v$. In other words, for all $i$, $H^{n,k,\ell}_{a,b\oplus v\circ \vec{0}}(y_i)=x_i$, where by $v\circ \vec{0}$ we mean a length $n$ vector that is equal to $v$ on the first $k-\ell$ bits and is zero elsewhere. $H=H^{n,k,\ell}_{a,b\oplus v\circ\vec{0}}$ is a function satisfying the claim.
\end{proof}

\subsection{The basic construction}\label{sec:basic}

We now describe a basic construction that will be the building block of our final encoding scheme. The final construction will be composed of concatenating several blocks of the basic construction together. To make reading easier, we drop all notation of floors and ceilings, but the reader should bear in mind that not all numbers that we may encounter in the analysis are integers and so some rounding takes place, but we choose parameters in such a way that we absorb all loss created by such rounding. 

\begin{lemma}\label{lem: analysis basic block}
For any $\epsilon>0$, $(R_1,\ldots,R_t)\in C_t$ and $1\leq t_0 \leq t$ the following holds. 
Let $n$ be an integer such that $n = c\log(1/\epsilon)/\epsilon$, for some constant $c>20$. Let $k,\ell,m$ be integers satisfying $k= (R_{t_0} - \epsilon/3) n$, $\ell = \epsilon n/3=c\log(1/\epsilon)/3$ and $m<2^{\ell/4}=(1/\epsilon)^{c/12}$. 
 Let $w_1,\ldots,w_m\in\B^n$ be such that 
for all $i$: if $t_0=1$ then $\wt(w_i)=0$, otherwise, 
$\wt(w_i)\leq \left(1 - \prod_{j=1}^{t_0-1}(1-p_j)\right)n.$
Let $x_1,\ldots,x_m\in \B^{k-\ell}$. Then, there exists $H_{t_0}\in {\cal H}^{n,k,\ell}$ and $y_1,\ldots,y_m\in \B^n$ with the following properties:
\begin{enumerate}
\item For all $i\in [m]$, $y_i\geq w_i$.
\item For all $i\in [m]$, 
$\wt(y_i)\leq \left(1 - \prod_{j=1}^{t_0}(1-p_j)\right)n.$
\item For all $i\in [m]$, $H_{t_0}(y_i)=x_i$.
\end{enumerate}
Furthermore, we can find such $H_{t_0}$ and $y_1,\ldots,y_m$ in time $2^{4n}$.
\end{lemma}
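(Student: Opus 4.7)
The plan is to reduce directly to Corollary~\ref{cor:common solution} by choosing appropriate target sets $Y_i$. For each $i\in [m]$ define
$$Y_i := \left\{y \in \B^n : y \geq w_i \text{ and } \wt(y) \leq \left(1 - \prod_{j=1}^{t_0}(1-p_j)\right)n\right\},$$
with the boundary convention that when $t_0 = t$ the weight bound collapses to the trivial $\wt(y) \leq n$ (since $p_t$ is not part of the weight vector, and after the last write there is no future write to reserve zeros for). Any $y_i \in Y_i$ with $H_{t_0}(y_i) = x_i$ automatically satisfies properties~1--3 of the lemma, so it suffices to verify the hypotheses of Corollary~\ref{cor:common solution} for $(Y_1,\ldots,Y_m)$. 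The assumption $m < 2^{\ell/4}$ is given; what remains is the size bound $|Y_i| \geq 2^k$.

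This size estimate is the main technical step. Write $q_j := \prod_{i=1}^{j}(1-p_i)$, with $q_0 := 1$. The hypothesis $\wt(w_i) \leq (1 - q_{t_0-1})n$ leaves $w_i$ with at least $q_{t_0-1} n$ zero coordinates, and any $y \in Y_i$ is obtained by flipping some of these zeros to $1$ up to weight $(1 - q_{t_0})n$. The worst case is when $\wt(w_i)$ is maximal, in which case (for $t_0 < t$) we may flip up to $(q_{t_0-1} - q_{t_0})n = q_{t_0-1} p_{t_0} n$ positions out of $q_{t_0-1} n$ zero positions, and so (using $p_{t_0} \leq 1/2$ to pick out the largest binomial in the range)
$$|Y_i| \geq \binom{q_{t_0-1}n}{q_{t_0-1}p_{t_0}n} \geq \frac{2^{q_{t_0-1} H(p_{t_0}) n}}{n+1}.$$
The membership $(R_1,\ldots,R_t) \in C_t$ gives $R_{t_0} \leq q_{t_0-1} H(p_{t_0})$; in the analogous $t_0 = t$ case one has $R_t \leq q_{t-1}$ together with $|Y_i| \geq 2^{n-\wt(w_i)} \geq 2^{q_{t-1}n}$. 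Since $n = c \log(1/\epsilon)/\epsilon$ with $c > 20$, a direct calculation gives $\log(n+1) \leq \epsilon n/3$, absorbing the $1/(n+1)$ loss and leaving $|Y_i| \geq 2^{(R_{t_0} - \epsilon/3)n} = 2^k$.

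Corollary~\ref{cor:common solution} then furnishes some $H_{t_0} \in \mathcal{H}^{n,k,\ell}$ and witnesses $y_i \in Y_i$ with $H_{t_0}(y_i) = x_i$ for every $i$, delivering properties~1--3. For the runtime claim, the family $\mathcal{H}^{n,k,\ell}$ has size $2^{2n}$ (one hash per pair $(a,b)\in\F_{2^n}^2$); for each candidate $H$ we scan, for each $i \in [m]$, all $y \in \B^n$ (cost $2^n$) and retain one satisfying the three properties if it exists. The corollary's guarantee ensures some $H$ in the enumeration works for all $i$ simultaneously, so the search succeeds, and the total work is at most $2^{2n} \cdot m \cdot 2^n \leq 2^{4n}$, using $m < 2^{\ell/4} \leq 2^n$. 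No step is really an obstacle: the substance is packaged inside Corollary~\ref{cor:common solution}, and the only care needed is to ensure that the capacity-region inequality, the binomial entropy estimate, and the $\log(n+1)$ loss all fit inside the $\epsilon/3$ slack built into~$k$.
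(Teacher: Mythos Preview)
Your proof is correct and follows essentially the same route as the paper: define the same sets $Y_i$, lower-bound $|Y_i|$ via the binomial--entropy estimate and the capacity-region inequality $R_{t_0}\le q_{t_0-1}H(p_{t_0})$, absorb the $\log n$ loss into the $\epsilon/3$ slack, invoke Corollary~\ref{cor:common solution}, and then brute-force over $\mathcal{H}^{n,k,\ell}\times\B^n$ for the $2^{4n}$ runtime. Your explicit handling of the boundary case $t_0=t$ (where there is no $p_t$ and the weight cap becomes vacuous, with $|Y_i|\ge 2^{q_{t-1}n}\ge 2^{R_t n}$) is a small improvement in presentation over the paper, which leaves that case implicit.
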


Intuitively, one should think of each $w_i$ in the statement of the lemma as the state of the memory after $t_0-1$ rounds of writing (i.e. before round $t_0$). The map $H_{t_0}$ that the lemma guarantees will be the encoding map used in round $t_0$ and each $y_i$ will be the word replacing $w_i$ in memory. (We are working in parallel with several $w_i$ and $y_i$ but with a single $H_{t_0}$.)

\begin{proof}
Let
\begin{equation*}
Y_i = \left\{y\in \B^n \mid y\geq w_i \text{ and } \wt(y)\leq \left(1 - \prod_{j=1}^{t_0}(1-p_j)\right)n \right\}.
\end{equation*}
As $\wt(w_i)\leq \left(1 - \prod_{j=1}^{t_0-1}(1-p_j)\right)n$ we have that 
$$|Y_i|\geq {n-\wt(w_i) \choose  \prod_{j=1}^{t_0-1}(1-p_j)\cdot p_{t_0}\cdot n}\geq { \prod_{j=1}^{t_0-1}(1-p_j)n  \choose  \prod_{j=1}^{t_0-1}(1-p_j)\cdot p_{t_0}\cdot n }$$
which by Stirling's formula can be lower bounded by 
\begin{eqnarray*}
\geq 2^{\prod_{j=1}^{t_0-1}(1-p_j) \cdot H(p_{t_0})\cdot n - \log (\prod_{j=1}^{t_0-1}(1-p_j)n)}&\geq & 2^{\prod_{j=1}^{t_0-1}(1-p_j) \cdot H(p_{t_0})\cdot n - \log (n)}\\
 &> & 2^{ (R_{t_0} - \epsilon /3)n}=2^k.
\end{eqnarray*}
We can now apply Corollary~\ref{cor:common solution} to find $H_{t_0}\in{\cal H}^{n,k,\ell}$ and $\{y_i\in Y_i\}$ such that for all $i$, $H_{t_0}(y_i)=x_i$. As $y_i\in Y_i$ we see that all the three requirements in the lemma hold.  

As for the furthermore part, we note that since we are guaranteed that such $H$ and $\{y_i\}$ exist, we can find them in a brute-force manner in time $|{\cal H}^{n,k,\ell}|\cdot \sum_{i=1}^{m}|Y_i|\leq m2^{3n}<2^{4n}$. 
\end{proof}

Note that we need $n$ to satisfy $\epsilon n > 3\log(n)+9$. Thus, $n\geq 20\log(1/\epsilon)/\epsilon$ suffices. 

The lemma basically tells us that if we have $m$ memory blocks, each of length $n$, then no matter what we wrote to each memory block in the first $t_0-1$ rounds, we can find suitable $\{y_i\}$ for writing in the next round. Notice that in order to decode $x_i$ one simply needs to compute $H_{t_0}(y_i)$. However, we somehow need to communicate $H_{t_0}$ to the decoder as it may depend on the content of the memory after the first $t_0-1$ rounds. In order to do so we write the $2n$-bit index of $H_{t_0}$ on fresh memory cells. (Recall that each $H\in {\cal H}^{n,k,\ell}$ is defined by a pair $(a,b)\in \F_{2^n}\times \F_{2^n}$.) Wrapping everything up we now define the basic building block of our $t$-write WOM scheme.

\paragraph{Basic WOM scheme:} For any given $\epsilon>0$ and $(R_1,\ldots,R_t)\in C_t$ we define the following encoding scheme. Let $c>20$ be an integer satisfying $6t^2<(t/\epsilon)^{c/12 -1}$.
Let $n=ct\log(t/\epsilon)/\epsilon$ be an integer. Let $k_2,\ldots,k_t,\ell,m$ be integers satisfying $k_i= (R_{i} - \epsilon/3t) n$, $\ell = \epsilon n/3t=c\log(t/\epsilon)/3$ and $m=2^{\ell/4}-1=(t/\epsilon)^{c/12}-1$. Let $N_0=2n(t-1)+mn$. Our encoding scheme will write to memory of length $N_0$. We shall think of the $N_0$ memory bits as being composed of $m$ blocks of $n$ bits where we write the ``actual'' words and additional $t-1$ blocks of $2n$ bits each where we store information that will help us decode the stored words.
\begin{itemize}
\item {\bf First round:} A message in the first round is given as a list of $m$ subsets of $[n]$, each of size at most $p_1 n$. We store the $i$th subset to the $i$th $n$-bit block by simply writing $1$ on all indices belonging to the set and $0$ elsewhere. It is clear how to decode the message written in the first round. 
\item {\bf Round $j$:} The input message for round $j$ is composed of $m$ words $x_1,\ldots,x_m \in \B^{k_j-\ell}$. Assume that the $m$ $n$-bit memory blocks contain the words $w_1,\ldots,w_m$. We first find $H_j$ and $y_1,\ldots,y_m$ as guaranteed by Lemma~\ref{lem: analysis basic block}. We write $y_i$ to the $i$th $n$-bit memory block and we also write the index of $H_j$ to the $j-1$ block of length $2n$. To decode we simply read the index of $H_j$ from the $j-1$th $2n$-bit block, and apply $H_j$ on each $y_i$ to get $x_i$. 
\end{itemize}

\paragraph{Analysis:} 
\begin{itemize}
\item {\bf Correctness:} By induction one can easily show that if $w$ was written to an $n$-bit memory block at round $t_0\geq 1$ then $\wt(w)\leq \left(1 - \prod_{j=1}^{t_0}(1-p_j)\right)n$. Thus, at the $j$th round the conditions of Lemma~\ref{lem: analysis basic block} are satisfied (assuming that we pick $n$ large enough) and therefore we can find $H_j$ and $\{y_i\}$ as required. 

\item {\bf Rate:} We first note that by our choice of parameters $$\frac{mn}{N_0} = \frac{mn}{(mn+2n(t-1))}=1-\frac{2(t-1)}{m+2(t-1)}>1-\frac{2t}{(t/\epsilon)^{c/12}}>1-\epsilon/3t.$$ In the first round we write a word of weight $p_1 n$. Thus the rate of this round is $$\frac{m\log\left({n \choose p_1 n}\right)}{N_0} > \frac{\left(H(p_1)n - \log(n)\right)m}{N_0} > \frac{\left(H(p_1)-\frac{\epsilon}{3t}\right)nm}{N_0}\geq (R_1-\frac{\epsilon}{3t})(1-\frac{\epsilon}{3t})>R_1-\frac{\epsilon}{t}.$$  In the $j$th round we encode $m$ messages each of length $k_j-\ell$. Thus, the total rate is $$\frac{m(k_j-\ell)}{N_0}\geq \frac{m((R_j-\epsilon/3t)-\epsilon/3t)n}{N_0}>(R_j-\frac{2\epsilon}{3t})(1-\frac{\epsilon}{3t})>R_j-\frac{\epsilon}{t}.$$ It follows that the total rate is at least $R_1+\ldots+R_t - t\frac{\epsilon}{t} = R_1+\ldots +R_t - \epsilon$.

\item {\bf Encoding Complexity:} by Lemma~\ref{lem: analysis basic block} each round can be completed in time $2^{4n}=(t/\epsilon)^{4ct/\epsilon}$. 

\item {\bf Decoding complexity:} To decode a message written in the $j$th round (assume $j>1$ as the case $j=1$ is clear) we simply need to perform a product of an $(k_j-\ell)\times n$ matrix with $m$ $n$-bit vectors (we can think of each $H_{a,b}$ as an affine transformation and associate to it a $(k-\ell)\times n$ matrix and a $(k_j-\ell)$ shift vector that completely define it). Thus, the decoding running time  is $\poly(kmn)=(t/\epsilon)^{O(c)}$.\footnote{Here we assume that the field $\F_{2^n}$ is explicitly given by an irreducible polynomial, so given $a\in\F_{2^n}$ we can find the $n\times n$ matrix $A$ over $\F_2$ satisfying $Ax=ax$ for all $x\in \F_{2^n}$ (when we abuse notation and think of $x$ both as an $n$-bit vector and field element) in time $\poly(n)$.}
\footnote{We do not try to optimize running time here.}
\end{itemize}

We summarize our conclusions in the following theorem.

\begin{theorem}\label{thm:main basic}
For any given $\epsilon>0$ and $(R_1,\ldots,R_t)\in C_t$ let $c>20$ an integer satisfying $6t^2<(t/\epsilon)^{c/12 -1}$.
Let $n=ct\log(t/\epsilon)/\epsilon$ and $m=(t/\epsilon)^{c/12}-1$ be integers. Let $N_0=2n(t-1)+mn$. Then, the scheme described above is a $t$-write WOM scheme of block length $N_0$, with rate larger than $R_1+\ldots +R_t - \epsilon$, encoding complexity of  $(t/\epsilon)^{4ct/\epsilon}$ and decoding complexity $(t/\epsilon)^{O(c)}$.
\end{theorem}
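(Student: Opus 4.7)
The plan is to verify the four bulleted claims (correctness, rate, encoding time, decoding time) of the scheme constructed immediately above the theorem, with Lemma~\ref{lem: analysis basic block} doing all the heavy lifting. Since the theorem merely packages the Analysis paragraph as a formal statement, the proof amounts to checking that the parameter choices $n = ct\log(t/\epsilon)/\epsilon$, $\ell = \epsilon n/(3t)$, $k_j = (R_j - \epsilon/(3t))n$ and $m = (t/\epsilon)^{c/12} - 1$ consistently satisfy the hypotheses of that lemma with ``$\epsilon$'' in the lemma replaced by $\epsilon/t$ throughout.

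First I would establish correctness by induction on the round number $t_0$, with hypothesis ``after the $t_0$-th write, every one of the $m$ memory blocks holds a word $w$ of weight at most $(1 - \prod_{j=1}^{t_0}(1-p_j))n$.'' The base case $t_0=1$ is immediate since each block receives the characteristic vector of a subset of $[n]$ of size $p_1 n$. For the inductive step, the hypothesis on the current blocks $w_1,\ldots,w_m$ is exactly the precondition of Lemma~\ref{lem: analysis basic block} at parameter $t_0+1$; I need only check the side conditions, namely that $n \geq 20\log(t/\epsilon)/(\epsilon/t)$ (forced by the factor $ct$ in $n$ and $c > 20$) and that $m < 2^{\ell/4}$ (immediate from $m = 2^{\ell/4}-1$). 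The lemma then supplies $H_{t_0+1}$ and $y_1,\ldots,y_m$ satisfying $y_i \geq w_i$ (so the WOM monotonicity constraint is respected, including on the $2n$-bit side blocks storing the hash indices, which are written once and to previously untouched cells), carrying the weight bound forward to round $t_0+1$. Decoding correctness follows because property~(3) of Lemma~\ref{lem: analysis basic block} gives $H_{t_0+1}(y_i) = x_i$, and the decoder can read the index of $H_{t_0+1}$ directly from the dedicated $2n$-bit block.

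For the rate I would sum the contributions per round, using the estimate $mn/N_0 > 1 - \epsilon/(3t)$ which follows from $m = (t/\epsilon)^{c/12}-1$ and the inequality $6t^2 < (t/\epsilon)^{c/12 -1}$ built into the choice of $c$. The first round contributes at least $m\log\binom{n}{p_1 n} \geq m(H(p_1)n - \log n) \geq m(H(p_1) - \epsilon/(3t))n$ bits by Stirling; dividing by $N_0$ gives rate at least $(R_1 - \epsilon/(3t))(1-\epsilon/(3t)) > R_1 - \epsilon/t$. Rounds $j \geq 2$ each write $m(k_j - \ell) = m(R_j - 2\epsilon/(3t))n$ bits, and the same computation produces per-round rate at least $R_j - \epsilon/t$. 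Summing over the $t$ rounds yields total rate at least $\sum_{i=1}^{t} R_i - \epsilon$.

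Finally, the encoding complexity bound follows because each of the $t$ rounds invokes Lemma~\ref{lem: analysis basic block} once, at cost $2^{4n} = (t/\epsilon)^{4ct/\epsilon}$, which dominates all other bookkeeping. For decoding, one reads off the $2n$-bit description of $H_j = H_{a,b}^{n,k_j,\ell}$, converts $a \in \F_{2^n}$ into the corresponding $n \times n$ matrix over $\F_2$ (in time $\poly(n)$ given an explicit irreducible polynomial), and then performs $m$ affine evaluations, totalling $\poly(k_j, m, n) = (t/\epsilon)^{O(c)}$. The main obstacle is really just bookkeeping: keeping track of the extra factor $t$ in all the rescaled parameters and confirming that every floor/ceiling loss is absorbed into the slack provided by taking $c$ sufficiently large, but given the explicit inequality $6t^2 < (t/\epsilon)^{c/12-1}$ baked into the statement this is routine.
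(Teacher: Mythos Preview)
Your proposal is correct and matches the paper's own argument essentially line for line: the paper's proof is precisely the ``Analysis'' paragraph preceding the theorem, establishing correctness by induction on the round via Lemma~\ref{lem: analysis basic block}, computing the rate through the same $mn/N_0 > 1-\epsilon/(3t)$ estimate and Stirling bound, and reading off the encoding and decoding complexities from the lemma and from the cost of $m$ affine evaluations. Your added remarks about the fresh $2n$-bit side blocks and about rescaling the lemma's ``$\epsilon$'' to $\epsilon/t$ are exactly the bookkeeping the paper leaves implicit.
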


\subsection{The full construction}\label{sec:full}

The full construction is obtained by concatenating (in the sense of string concatenation) several blocks of the basic construction together. 

\begin{theorem}\label{thm: full construction}
Given $a,\epsilon>0$ and $(R_1,\ldots,R_t)\in C_t$ let $c>20$ an integer satisfying $6t^2<(t/\epsilon)^{c/12 -1}$.
Let $n=ct\log(t/\epsilon)/\epsilon$, $m=(t/\epsilon)^{c/12}-1$ and $N_0=2n(t-1)+mn$ be integers. Set $N_1=2^{\frac{4n}{a-1}}\cdot N_0$. Then, there is an explicit $t$-write WOM scheme of block length $N_1$, with rate larger than $R_1+\ldots +R_t - \epsilon$, encoding complexity of  $N_1^{a}$ and decoding complexity $N_1^{1+O(a\epsilon/t)}$.
\end{theorem}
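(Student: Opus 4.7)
My plan is to concatenate $K := N_1/N_0 = 2^{4n/(a-1)}$ independent copies of the basic WOM scheme from Theorem~\ref{thm:main basic}. Each copy occupies its own disjoint block of $N_0$ memory cells and runs the basic scheme using its own freshly chosen hash functions $H_2,\ldots,H_t$. In each write round, the incoming message is partitioned into $K$ chunks, one fed to each copy. Because the copies operate on disjoint memory blocks, both WOM axioms are preserved: the monotonicity $c \leq E_i(c, x_i)$ holds blockwise, and the decoder recovers round $i$'s message by concatenating the per-block decodings. This is exactly the ``concatenation does not hurt'' observation already invoked in Section~\ref{sec:brute}.

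The rate verification is immediate: each basic copy achieves rate strictly greater than $R_1 + \ldots + R_t - \epsilon$ by Theorem~\ref{thm:main basic}, and rate is preserved under string concatenation of independent copies, so the same lower bound holds for the full length-$N_1$ scheme.

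The substantive step is verifying the two complexity bounds, which is where the precise choice of $K$ matters. For encoding, each basic copy takes time $2^{4n}$ per round, so the total encoding time across all $K$ copies and $t$ rounds is at most $t \cdot K \cdot 2^{4n} = t \cdot 2^{4na/(a-1)}$. On the other hand $N_1^a = K^a \cdot N_0^a = 2^{4na/(a-1)} \cdot N_0^a$, so the bound $N_1^a$ follows as long as $t \leq N_0^a$, which holds comfortably. For decoding, each copy decodes in $(t/\epsilon)^{O(c)}$ time, giving total decoding time at most $K \cdot (t/\epsilon)^{O(c)} \leq N_1 \cdot (t/\epsilon)^{O(c)}$. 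It remains to check that $(t/\epsilon)^{O(c)} \leq N_1^{O(a\epsilon/t)}$. Taking logarithms, $\log N_1 \geq 4n/(a-1) = \Theta(ct\log(t/\epsilon)/\epsilon)$, so $(a\epsilon/t)\log N_1 = \Omega(ca\log(t/\epsilon))$, which absorbs $O(c\log(t/\epsilon))$ with a suitable hidden constant.

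The only real obstacle is this last calibration of the exponent $O(a\epsilon/t)$ so that the $\poly(n)$ per-block decoding overhead fits inside $N_1^{O(a\epsilon/t)}$. The choice $N_1 = 2^{4n/(a-1)} \cdot N_0$ is engineered precisely to make the encoding bound $N_1^a$ tight, and the decoding bound then drops out from the fact that $\log N_1$ is a constant (depending on $a$) times $n$, while per-block decoding costs only $\poly(n) = (t/\epsilon)^{O(c)}$. Everything else is a direct inheritance from Theorem~\ref{thm:main basic}.
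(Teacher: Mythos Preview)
Your proposal is correct and follows exactly the paper's approach: concatenate $n_1 = N_1/N_0 = 2^{4n/(a-1)}$ independent copies of the basic scheme from Theorem~\ref{thm:main basic}, inherit the rate blockwise, and multiply the per-block complexities by $n_1$. You actually spell out the encoding and decoding complexity calibrations in more detail than the paper's own proof, which simply asserts that ``the claimed running times now follow from Theorem~\ref{thm:main basic}.''
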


\begin{proof}
Let $n_1=(N_1/N_0)=2^{4n/(a-1)}$. Our scheme is composed of concatenating $n_1$ copies of the basic construction. Thus, the block length that we obtain is $n_1\cdot N_0=N_1$. It is clear that concatenation does not change the rate and the effect on the complexity is that we need to repeat all operations $n_1$ times. The claimed running times now follow from Theorem~\ref{thm:main basic}.
\end{proof}

Theorem~\ref{thm:main} now follows by appropriately choosing $a=1+\delta$. Note that 
$$N_1 = n_1\cdot N_0 = O(n_1\cdot n\cdot m)=O\left( \left(\frac{t}{\epsilon}\right)^{\frac{4ct}{\epsilon(a-1)}}\cdot \left(\frac{t}{\epsilon}\right)^{c/12}\cdot \frac{ct\log(t/\epsilon)}{\epsilon}\right)=\left(\frac{t}{\epsilon}\right)^{O\left(\frac{ct}{\epsilon\delta}\right)}.$$

\subsection{Larger alphabets}

We only give a sketch of the proof of Theorem~\ref{thm:main-large} as it is nearly identical to the proof of Theorem~\ref{thm:main}. 

As before we shall think of the field $\F_{q^n}$ both as a field and as an $n$-dimensional vector space over $\F_q$. 
For $a,b\in \F_{q^n}$ we define the map $H^{n,k,\ell}_{a,b}:\F_q^n\to\F_q^{k-\ell}$ as $H^{n,k,\ell}_{a,b}(x) = (ax+b)_{[k-\ell]}$. We also set
\begin{equation*}
{\cal H}_q^{n,k,\ell} = \left\{ H_{a,b}^{n,k,\ell} \mid a,b \in \F_{q^n}\right\}.
\end{equation*}
The first step is to note the following analogs to Lemma~\ref{lemma: hash is good} and Corollary~\ref{cor:common solution}.

\begin{lemma}[Leftover hash lemma over $\F_q$ \cite{ImpagliazzoLL89}]\label{lemma: large hash is good}
Let $k,\ell,n$ be integers. Let $Y\subseteq\F_q^{n}$ be a set of size $q^{k}$. Then, the distribution $(a,b,H^{n,k,\ell}_{a,b}(y))$, obtained by picking $a,b\in\F_{q^n}$ and $y\in Y$ independently and uniformly at random, is $q^{-\ell/2}$ close, in statistical distance, to the uniform distribution on $\F_q^{2n+k-\ell}$.
\end{lemma}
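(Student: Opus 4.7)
The plan is to mimic the standard proof of the leftover hash lemma, carried out over $\F_q$ instead of $\F_2$. The two ingredients are (i) pairwise independence (2-universality) of the family $\mathcal{H}_q^{n,k,\ell}$, and (ii) the generic collision-probability argument that converts 2-universality plus min-entropy into statistical closeness to uniform.

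First I would verify that $\mathcal{H}_q^{n,k,\ell}$ is 2-universal. Fix distinct $x,x'\in\F_q^n$. The map $(a,b)\mapsto(ax+b,ax'+b)$ from $\F_{q^n}^2$ to $\F_{q^n}^2$ is an $\F_q$-linear bijection (its ``determinant'' is $x-x'\neq 0$ in $\F_{q^n}$), so if $(a,b)$ is uniform on $\F_{q^n}^2$ then $(ax+b,\,ax'+b)$ is uniform on $\F_{q^n}^2$. Projecting each coordinate onto its first $k-\ell$ entries, the pair $(H_{a,b}^{n,k,\ell}(x),H_{a,b}^{n,k,\ell}(x'))$ is uniform on $\F_q^{k-\ell}\times\F_q^{k-\ell}$. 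In particular, $\Pr_{a,b}[H_{a,b}(x)=H_{a,b}(x')]=q^{-(k-\ell)}$.

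Next I would bound the collision probability of the joint distribution $(a,b,H_{a,b}^{n,k,\ell}(y))$ for $(a,b)$ uniform on $\F_{q^n}^2$ and $y$ uniform on $Y$. Let $(a_1,b_1,y_1)$ and $(a_2,b_2,y_2)$ be two independent copies. Then
\begin{align*}
\mathrm{CP} &= \Pr[(a_1,b_1)=(a_2,b_2)]\cdot\Pr_{a,b,y_1,y_2}[H_{a,b}(y_1)=H_{a,b}(y_2)] \\
&\le q^{-2n}\Bigl(\Pr[y_1=y_2] + \Pr[y_1\neq y_2]\cdot q^{-(k-\ell)}\Bigr) \\
&\le q^{-2n}\bigl(q^{-k} + q^{-(k-\ell)}\bigr) = \frac{1+q^{-\ell}}{q^{2n+k-\ell}}.
\end{align*}
The uniform distribution on $\F_q^{2n+k-\ell}$ has collision probability $q^{-(2n+k-\ell)}$, so the ratio is at most $1+q^{-\ell}$.

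Finally I would invoke the standard Cauchy--Schwarz conversion from $\ell_2$ to $\ell_1$ distance: if a distribution $D$ on a set of size $M$ satisfies $\mathrm{CP}(D)\leq (1+\gamma)/M$, then $\lVert D-U\rVert_1\leq \sqrt{\gamma}$, so the statistical distance is at most $\tfrac{1}{2}\sqrt{\gamma}$. Plugging in $\gamma=q^{-\ell}$ and $M=q^{2n+k-\ell}$ yields statistical distance at most $\tfrac{1}{2}q^{-\ell/2}\leq q^{-\ell/2}$, which is the claimed bound. No step is really an obstacle here; the only thing to be careful about is interpreting multiplication in $\F_{q^n}$ as an $\F_q$-linear operation on $\F_q^n$, which is exactly what justifies the 2-universality computation in the first step.
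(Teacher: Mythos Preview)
Your proof is correct and is exactly the standard proof of the leftover hash lemma: verify 2-universality of the affine family, bound the collision probability of the joint distribution, and convert to statistical distance via Cauchy--Schwarz. The paper does not give its own proof of this lemma; it simply states it as the $\F_q$ analogue of the classical leftover hash lemma and cites \cite{ImpagliazzoLL89}, so there is nothing further to compare.
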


\begin{corollary}\label{cor:large common solution}
Let $q,\ell,k,n,m$ be integers such that $\ell\leq k\leq n$ and $m< q^{\ell/4}$. Let $Y_1,\ldots,Y_m \subseteq\F_q^n$ be sets of size $|Y_1|,\ldots,|Y_m|\geq q^k$. Then, for any $x_1,\ldots,x_m\in\F_q^{k-\ell}$ there exists $H\in {\cal H}_q^{n,k,\ell}$ and $\{y_i\in Y_i\}$ such that for all $i$, $H(y_i)=x_i$. 
\end{corollary}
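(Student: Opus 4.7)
The plan is to mirror the binary proof of Corollary~\ref{cor:common solution} verbatim, with $2$ replaced by $q$ throughout, so the only real work is verifying that each quantitative step carries over when the alphabet changes. I would organize it in three steps.

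First, I would establish the $q$-ary analog of Corollary~\ref{cor: has support}: the fraction of $H\in {\cal H}_q^{n,k,\ell}$ with $|H(Y)|\leq q^{k-\ell}(1-q^{-\ell/4})$ is at most $q^{-\ell/4}$. The reasoning is identical to the binary case: if $H(Y)$ misses a $q^{-\ell/4}$-fraction of $\F_q^{k-\ell}$, then the distribution of $H(y)$ for random $y\in Y$ differs from uniform on $\F_q^{k-\ell}$ by at least $q^{-\ell/4}$ in statistical distance, so if more than a $q^{-\ell/4}$-fraction of $H$ were bad, the joint distribution $(H,H(y))$ would be more than $q^{-\ell/4}\cdot q^{-\ell/4}=q^{-\ell/2}$ far from uniform on ${\cal H}_q^{n,k,\ell}\times \F_q^{k-\ell}$, contradicting Lemma~\ref{lemma: large hash is good}.

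Second, I would apply a union bound over the $m$ sets $Y_1,\dots,Y_m$. Since $m<q^{\ell/4}$, there exists some $H^{n,k,\ell}_{a,b}\in {\cal H}_q^{n,k,\ell}$ for which every image $X_i:=H^{n,k,\ell}_{a,b}(Y_i)$ has size strictly greater than $q^{k-\ell}(1-q^{-\ell/4})$, i.e.\ each $X_i$ misses fewer than a $q^{-\ell/4}$-fraction of $\F_q^{k-\ell}$. For this fixed $H$, I then do a second union bound over shifts $v\in \F_q^{k-\ell}$: for each $i$, the set of $v$ with $x_i-v\notin X_i$ has density less than $q^{-\ell/4}$, so the total density of ``bad'' shifts is less than $m\cdot q^{-\ell/4}<1$, producing some $v$ with $x_i-v\in X_i$ for every $i$ simultaneously.

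Third, I would absorb $v$ into the additive constant. Given such a $v$, choose $y_i\in Y_i$ with $H^{n,k,\ell}_{a,b}(y_i)=x_i-v$. Viewing $v$ as the length-$n$ vector $v\circ\vec{0}$ (equal to $v$ on the first $k-\ell$ coordinates and $0$ elsewhere), we obtain $H^{n,k,\ell}_{a,\,b+v\circ\vec{0}}(y_i)=x_i$ for all $i$, so $H:=H^{n,k,\ell}_{a,\,b+v\circ\vec{0}}$ together with the $y_i$'s witnesses the corollary. The only minor subtlety, and the step I would be most careful to check, is that ``truncation to the first $k-\ell$ coordinates'' in the $\F_q^n$-to-$\F_q^{k-\ell}$ projection commutes with addition by a vector supported on those coordinates — this is immediate here since truncation is $\F_q$-linear, exactly as in the binary case, so no real obstacle arises.
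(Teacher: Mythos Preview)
Your proposal is correct and is exactly the argument the paper intends: the paper does not spell out a separate proof for Corollary~\ref{cor:large common solution}, merely noting it as the direct $q$-ary analog of Corollary~\ref{cor:common solution}, and your three steps reproduce that binary proof with $2$ replaced by $q$ (including the intermediate $q$-ary version of Corollary~\ref{cor: has support} and the shift-absorption $b\mapsto b+v\circ\vec{0}$).
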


Given Corollary~\ref{cor:large common solution}, we can construct, in a similar manner to Sections~\ref{sec:basic} and \ref{sec:full}, schemes for $t$-write $q$-ary WOM codes. The only difference is that the matrix $S_i$ (recall discussion before the statement of Theorem~\ref{thm:main-large}) will be used to define the sources $Y$ in which we will be looking for solutions in the $i$th write. Namely, give a word $w$ in memory, the relevant codewords for the $i$th write will be those that have a fraction $(S_i)_{j_2,j_1}$ coordinates labeled $j_2$ among all coordinates where $w$ had value $j_1$, namely,  we shall work with the sources $$Y_i(w) = \left\{w' \mid \#\{ b \mid w'_b = j_2 \text{ and } w_b=j_1\} \approx (S_i)_{j_2,j_1} \cdot \#\{ b \mid w_b=j_1\}\right\} .$$

\section{Discussion}

In this work we showed how to construct capacity achieving $t$-write WOM codes.
The main drawback of our codes is that the block length is exponential in $1/\epsilon$, where $\epsilon$ is the gap to capacity. It is an interesting question, that may have practical significance, to construct capacity achieving codes with a much shorter block length $N=\poly(1/\epsilon)$. If we would like to get such a result using  similar techniques, then, in particular, we will need a way to efficiently find sparse solutions to systems of linear equations of restricted forms. We know that finding sparse solutions to general linear systems is NP-hard, but as the matrices that we used for our construction are very structured it does not sound improbable that such an algorithm should exist. Moreover, one can also imagine a different set of matrices that can be used for decoding for which sparse solutions can be found efficiently.  

Another way to think of our use of the universal hash family is that what we actually used is a very good extractor (for background on extractors we refer the reader to \cite{Shaltiel02}). Namely, an extractor with an exponentially small error. We note that we actually do not need such good extractors. We merely gave this construction as it is easier to explain. In particular,  one can use \cite{Rao09,Shaltiel08} to replace the ensemble of matrices with a single function (i.e. with a seedless extractor for several sources). While this will save a bit on the block length (though will not affect it in a significant manner) the main hurdle which is efficiently {\em inverting} the extractor still remains. Thus, it seems that the main challenge, as far as our approach goes, is to efficiently construct seedless extractors (or extractors with an exponentially small error) that can be efficiently inverted, assuming that the source has some known ``nice'' structure. In our case all the sources that we deal with are highly structured - they contain all vectors of a certain weight that ``sit above'' a given vector. In a sense, this is the ``positive'' part of the hamming ball centered at the given vector. We think that this is an interesting challenge and we expect that developments regarding this question will lead to improved constructions of WOM codes. \\

\cite{Shpilka-WOM12} also considered the related problem of designing encoding schemes for defective memory and gave a scheme based on the extractor of Raz, Reingold and Vadhan \cite{RazRV02}. The disadvantage is that for the scheme to work \cite{Shpilka-WOM12} needed to assume that there are $\poly(\log n)$ ``clean bits'' that are known to both the encoder and decoder. Recently Gabizon and Shaltiel considered the problem and showed how to obtain capacity achieving encoding schemes from zero-error seedless dispersers for affine sources \cite{GabizonShaltiel12}.\footnote{Roughly, a zero-error disperser for affine sources is a function $f\B^n \to \B^m$ such that any affine space $V$, of high enough dimension, is mapped onto $\B^m$. That is, for any such $V$, $f(V)=\B^m$.} Similarly to our work, \cite{GabizonShaltiel12} needed their disperser to be efficiently invertible. Since they were dealing with affine sources (and not with Hamming balls as we do here) they managed to achieve this by cleverly combining a search over a small space with solving a system of linear equations.   We also note that the encoding function of \cite{GabizonShaltiel12} is randomized (with expected running time $\poly(n)$), whereas the scheme of \cite{Shpilka-WOM12}, and the one that we give in this paper, are deterministic. Nevertheless, it may be possible that using techniques similar to those of \cite{GabizonShaltiel12} one may be able to improve on the block-length of our construction.

\section*{Acknowledgments}

Most of this work was done while the author was visiting at the school of computer and communication sciences at EPFL. We wish to thank Rudiger Urbanke for his hospitality and for fruitful discussions. We thank David Burshtein for answering our questions regarding  \cite{BurshteinS12} and we also thank Ariel Gabizon and Ronen Shaltiel for their comments.
We are grateful to Eitan Yaakobi for commenting on an earlier draft of this paper and for many discussions on WOM codes.

\bibliographystyle{alpha}


\newcommand{\etalchar}[1]{$^{#1}$}

\end{document}